\newtheorem{theorem}{Theorem} 
\newtheorem*{lemma}{Lemma}
\newtheorem*{definition}{Definition}
\newcommand{\sket}[1]{{\ensuremath{\lvert#1\rangle}}}
\newcommand{\lket}[1]{{\ensuremath{\left\lvert#1\right\rangle}}}
\newcommand{\ket}[1]{\if@display\lket{#1}\else\sket{#1}\fi}
\newcommand{\sbra}[1]{{\ensuremath{\langle#1\rvert}}}
\newcommand{\lbra}[1]{{\ensuremath{\left\langle#1\right\rvert}}}
\newcommand{\bra}[1]{\if@display\lbra{#1}\else\sbra{#1}\fi}
\newcommand{\sbraket}[2]{{\ensuremath{\langle#1\rvert#2\rangle}}}
\newcommand{\lbraket}[2]{{\ensuremath{\left\langle#1\!\left\rvert\vphantom{#1}#2\right.\!\right\rangle}}}
\newcommand{\braket}[2]{\if@display\lbraket{#1}{#2}\else\sbraket{#1}{#2}\fi}
\newcommand{\sketbra}[2]{{\ensuremath{\lvert #1\rangle\!\langle #2\rvert}}}
\newcommand{\lketbra}[2]{{\ensuremath{\left\lvert #1\right\rangle\!\!\left\langle #2\right\rvert}}}
\newcommand{\ketbra}[2]{\if@display\lketbra{#1}{#2}\else\sketbra{#1}{#2}\fi}
\newcommand{\proj}[1]{\ketbra{#1}{#1}}
\newcommand*{\id}{\mathrm{id}}
\DeclareMathOperator{\tr}{tr}
\newcommand{\strace}[2][@]{\ensuremath{\tr\ifthenelse{\equal{#1}{@}}{}{_{#1}}(#2)}}
\newcommand{\ltrace}[2][@]{\ensuremath{\tr\ifthenelse{\equal{#1}{@}}{}{_{#1}}\left(#2\right)}}
\newcommand{\trace}[2][@]{\if@display\ltrace[#1]{#2}\else\strace[#1]{#2}\fi}
\newcommand{\sspan}[1]{{\ensuremath{\operatorname{span}(#1)}}}
\newcommand{\lspan}[1]{{\ensuremath{\operatorname{span}\left(#1\right)\!}}}
\newcommand{\vspan}[1]{\if@display\lspan{#1}\else\sspan{#1}\fi}
\newcommand*{\assign}{\ensuremath{\kern.5ex\raisebox{.1ex}{\mbox{\rm:}}\kern -.3em =}}
\newcommand{\eps}{\varepsilon}
\newcommand*{\mF}{\mathcal{F}}
\newcommand*{\mFaug}{\mathcal{F}_{\!\mathit{aug}}}
\newcommand{\qp}{\pi}
\newcommand{\epsclose}[1][\eps]{\approx_{#1}}
\newcommand{\A}{{\sf A}}
\newcommand{\dA}{{\sf A}'}
\newcommand{\B}{{\sf B}}
\newcommand{\dB}{{\sf B}'}
\newcommand{\hA}{\hat{\sf A}}
\newcommand{\dhA}{\hat{\sf A}'}
\newcommand{\dhB}{\hat{\sf B}'}
\newcommand{\hut}[1]{#1}
\newcommand{\refe}{R}
\newcommand{\Ain}{U}
\newcommand{\Bin}{V}
\newcommand{\FAin}{\tilde{U}}
\newcommand{\FBin}{\tilde{V}}
\newcommand{\FAout}{\tilde{X}}
\newcommand{\FBout}{\tilde{Y}}
\newcommand{\Bcheatout}{Y'}
\newcommand{\Aout}{X}
\newcommand{\Bout}{Y}
\newcommand{\aux}{K}
\newcommand{\Apur}{X'_1}
\newcommand{\Bpur}{Y'_1}
\newcommand{\pur}{P}
\newcommand*{\cD}{{\cal D}} 
\newcommand*{\cW}{{\cal W}} 
\newcommand*{\cE}{{\cal E}}
\begin{document}

\title{Complete Insecurity of Quantum Protocols for Classical Two-Party Computation}

\author{Harry \surname{Buhrman}}
\affiliation{University of Amsterdam and CWI Amsterdam, The Netherlands}

\author{Matthias \surname{Christandl}}
\affiliation{Institute for Theoretical Physics, ETH Zurich, Wolfgang-Pauli-Strasse 27, CH-8093 Zurich, Switzerland}
\author{Christian \surname{Schaffner}}
\affiliation{University of Amsterdam and CWI Amsterdam, The Netherlands}

\pacs{03.67.Dd, 03.67.Ac, 03.67.Hk}

\date{\today}
\begin{abstract}
A fundamental task in modern cryptography is the joint computation
of a function which has two inputs, one from
Alice and one from Bob, such that neither of the two can learn more
about the other's input than what is implied by the value of the
function. In this Letter, we show that any
quantum protocol for the computation of a classical deterministic function that outputs the result to both parties (two-sided
computation) and that is secure against a cheating Bob can be
completely broken by a cheating Alice. Whereas it is known that
quantum protocols for this task cannot be completely secure, our
result implies that security for one party implies complete insecurity for the other. Our
findings stand in stark contrast to recent protocols for weak coin tossing,
and highlight the limits of cryptography within quantum mechanics.  
We remark that our conclusions remain valid, even if security is only required to be approximate and if the function that is computed for Bob is different from that of Alice.
\end{abstract}
\maketitle

 Traditionally, cryptography has been understood as the art of ``secret
 writing``, i.e., of sending messages securely from one party to
 another. Today, the research field of cryptography comprises much more
 than encryption and studies all aspects of secure communication and
 computation among players that do not trust each other, including tasks such as electronic voting and auctioning. Following the excitement that the exchange of quantum particles may allow for the distribution of a key that is unconditionally secure~\cite{BB84,E91}, a level of security
 unattainable by classical means, the question arose whether other fundamental cryptographic tasks could be
 implemented with the same level of security using quantum
 mechanical effects. For oblivious transfer and bit commitment, it was shown that the answer is negative~\cite{LoChau96BCprl, Mayers97BC}. Interestingly, however,
 a weak version of a coin toss can be implemented by quantum
 mechanical means~\cite{Mochon07}.


In this Letter we study the task of secure two-party
computation. Here, two mistrustful players, Alice and Bob, wish to
compute the value of a classical deterministic function $f$, which takes an input
$u$ from Alice and $v$ from Bob, in such a way that both learn the
result of the computation and that
none of the parties can learn more about the other's input, even by deviating from the protocol. As our main result we show that any quantum protocol which is secure against a cheating Bob can be completely broken by a cheating Alice. Formally, we design an attack by Alice which allows her to compute the value of the function $f$ for all of her inputs (rather than only a single one, which would be required from a secure protocol). 

Our result strengthens the impossibility result for two-sided secure
two-party computation by Colbeck, where he showed that Alice can
always obtain more information about Bob's input than what is implied
by the value of the function~\cite{Colbeck07}. In a similar way, we
complement a result by Salvail, Schaffner and Sot\'akov\'a~\cite{SSS09}
showing that any quantum protocol for a non-trivial primitive
necessarily leaks information to a dishonest player. Our result is
motivated by Lo's impossibility result for the case where only Alice
obtains the result of the function (one-sided
computation)~\cite{Lo97}. Lo's approach is based on the idea that Bob
does not have any output; hence, his quantum state cannot depend on
Alice's input. Then, Bob has learned nothing about Alice's input and a
cheating Alice can therefore still change her input value (by
purifying the protocol) and thus cheat. 

In the two-sided case, this
approach to proving the insecurity of two-party computation fails as
Bob knows the value of the function and has thus some information
about Alice's input. In order to overcome this problem we develop a
new approach. We start with a formal definition of security 
based on the standard real/ideal-world paradigm from modern cryptography. 
In our case of a classical functionality, this definition guarantees the existence of a classical
input for Bob in the ideal world, even if he is, in the real world, dishonestly purifying his steps of the protocol. Since real and ideal are indistinguishable for a secure protocol and since a purification of the classical input cannot be part of Bob's systems, Alice can now obtain a copy of this input by applying a unitary---constructed with help of Uhlmann's theorem---to her output registers and, henceforth, break the protocol.

We wish to emphasize that the above
conclusion remains valid if the protocol is not required to be
perfectly secure (nor perfectly correct). More precisely, if the
protocol is secure up to a small error against cheating Bob, then
Alice is able to compute the value of the function for all of her
inputs with only a small error. Since the error
is independent of the number of inputs that both Alice and Bob
have, our analysis improves over Lo's result in the
one-sided case. In fact, our results apply to this case since, more generally, they remain true should Bob receive the output of a function $g$, different from Alice's $f$, as a careful look at our argument reveals.

\medskip\noindent {\bf Security Definition.}  Alice and Bob, at distant locations and only connected with a quantum channel, wish to execute a protocol that takes an input $u$ from Alice and an input $v$ from Bob and that outputs the value $f(u, v)$ of a classical deterministic function $f$ to both of them. Since Alice
does not trust Bob, she wants to be sure that the protocol does not
allow him to extract more information about her input than what is
implied by the output value of the function. The same should be true
if Alice is cheating and Bob is honest. Whereas for simple functions this intuitive notion
of security can be made precise by stating a list of security
requirements for certain quantum states of Alice and Bob, such an approach seems
very complicated and prone to pitfalls for general functions $f$, in
particular, if we want to consider protocols that are only
approximately secure. We therefore follow the modern literature on cryptography where such situations have been in
the center of attention for many years (cf.~zero-knowledge,
composability) and where a suitable notion of security, known as the
real\slash ideal-world paradigm, has been firmly established. 

In this paradigm we first define an ideal situation in
which everything is computed perfectly and securely and call this the
\emph{ideal functionality}. Informally, a
two-party protocol is secure if it looks to the outside world just like the ideal functionality it is supposed to implement. More concretely, a protocol is deemed secure if for every adversarial strategy, or \emph{real
  adversary}, there exists an \emph{ideal adversary}
interacting only with the ideal functionality such that the execution of the protocol in the real world is \emph{indistinguishable} from this ideal world. If such a security guarantee holds, it is clear that a
secure protocol can be treated as a call to the ideal functionality
and hence, it is possible to construct and prove secure more
complicated protocols in a modular fashion. See~\cite{Canetti00, Canetti96,Goldreich04} and~\cite{Unruh2004, unruh-2009, BenOrMayers2004, FS09} for further information about this concept of security in the context of classical and quantum protocols, respectively.

There exist different meaningful ways to make the above informal
notion of the real\slash ideal-world paradigm precise. All these
notions have in common that the execution of the protocol by the
honest and dishonest players is modeled by a completely positive trace-preserving (CPTP)
map. Likewise, every ideal adversary interacting with the ideal
functionality is composed out
of CPTP maps modeling the pre- and postprocessing of the in- and
outputs to the ideal functionality (which is a CPTP map itself). A desirable notion of security is the following: for every real adversary there exists an ideal adversary,
such that the corresponding CPTP maps are (approximately)
indistinguishable. The natural measure of distinguishability of CPTP
maps in this context is the diamond norm, since it can be viewed as the
maximal bias of distinguishing real and ideal world by supplying
inputs to the CPTP maps and attempting to distinguish the outputs by
measurements (i.e.\ by interacting with an environment). This rather strong notion of security  naturally embeds into a composable framework for security in which also quantum key distribution can be proven secure (see e.g.~\cite{ChKoRe09}).

Since our goal is the establishment of a no-go theorem, we consider a
notion of security which is weaker than the above in two
respects. First, we do not allow the environment to supply an arbitrary
input state but only the purification of a classical input (see
definition of $\rho_{UVR}$ below), and second, we consider a different
order of quantifiers: instead of ``$\forall$ real adversary $\exists$
ideal adversary $\forall$ input, the output states are
indistinguishable'' as a security requirement we only require ``$\forall$ real adversary $\forall$
input $\exists$ ideal adversary, the outputs
states are indistinguishable.'' This notion of security is closely
related to notions of security considered in~\cite{FS09, unruh-2009}
and is further discussed in the appendix.

We will now give a formal definition of security. Following the
notation of~\cite{FS09}, we denote by $\A$ and $\B$ the real honest
Alice and Bob and add a prime to denote dishonest players $\dA,\dB$
and a hat for the ideal versions $\hat{\A},\hat{\B}$. The CPTP map corresponding to the protocol for
honest Alice and dishonest Bob is denoted by $\pi_{\A,\B'}$. Both
honest and dishonest players obtain an input, in Alice's case $u$ (in
register $\Ain$) and in Bob's case $v$ (in register $V$) drawn from
the joint distribution $p(u,v)$.  The output state of the protocol,
augmented by the reference $\refe$, takes the form $ \id_{\refe}
\otimes \qp_{\A,\dB}(\rho_{\Ain \Bin \refe}),$ where $\rho_{\Ain \Bin
  \refe}$ is a purification of $\sum_{u,v} p(u,v) \proj{u}_{\Ain} \proj{v}_{\Bin} $.

\begin{figure}
\includegraphics[trim = 10mm 50mm 40mm 70mm,width=1\columnwidth]{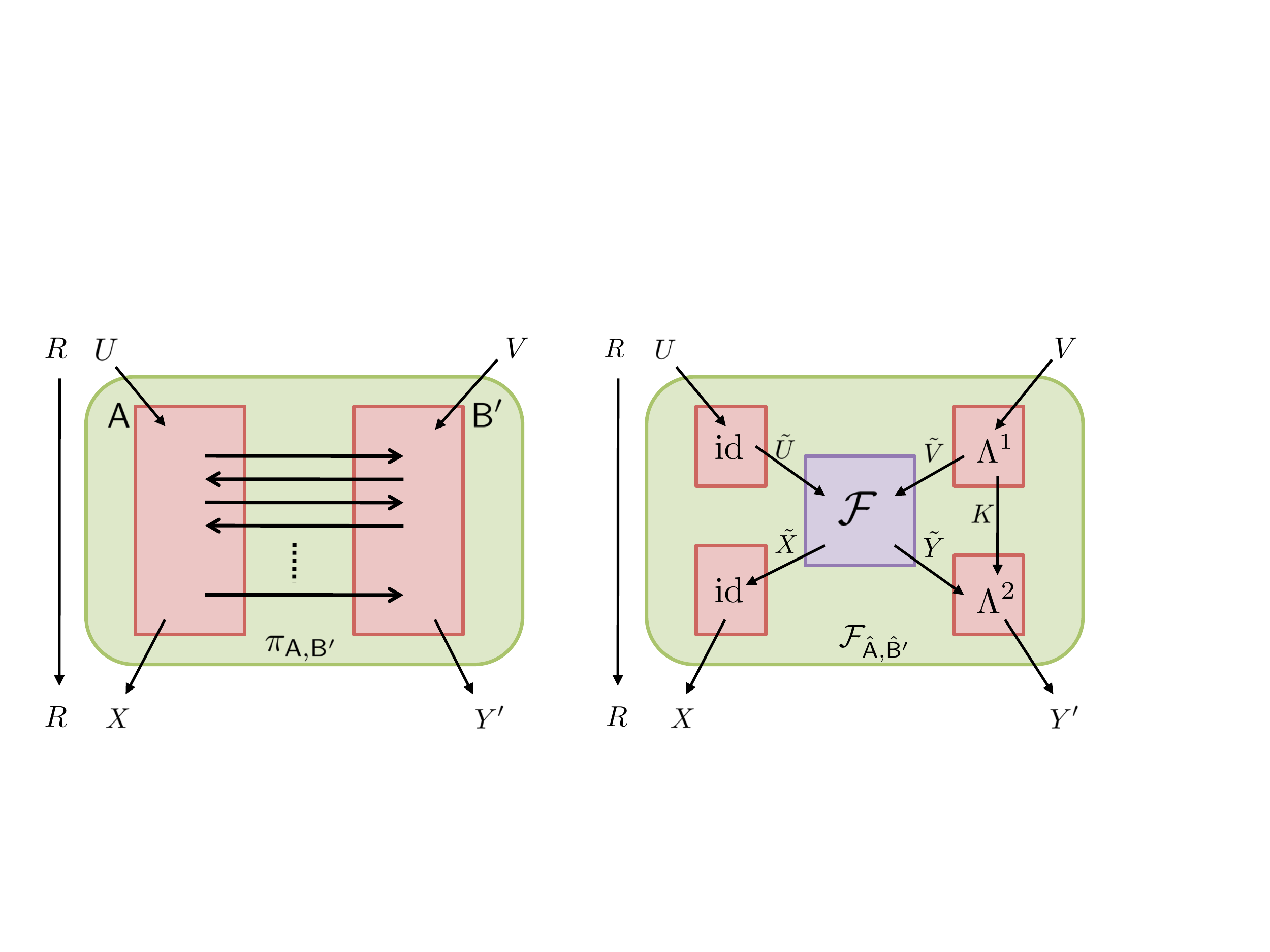}
\caption{Illustration of the security definition.
A protocol is secure against Bob, if the \emph{real protocol} (left) can
be simulated as an interaction with the \emph{ideal functionality $\mathcal{F}$}
(right). \label{fig:security}}
\end{figure}

Since we are faced with the task of the secure evaluation of a
\emph{classical} deterministic function, we consider an ideal
functionality $\mF$ which measures the inputs in registers $\FAin$ and $\FBin$ and outputs orthogonal states in registers $\FAout$ and $\FBout$ that correspond to the function
values. Formally, 
$\mF(\ketbra{u}{u'}_{\FAin}  \ketbra{v}{v'}_{\FBin}) :=\delta_{u,u'} \delta_{v,v'}   \proj{f(u, v)}_{\FAout} \proj{f(u, v)}_{\FBout},$
where $\delta$ denotes the Kronecker delta function.
When an ideal honest $\hA$ and an ideal adversary $\dhB$ interact with the ideal functionality, we denote the joint map by $\mF_{\hat\A,\hat\B'}: \Ain \Bin \rightarrow \Aout \Bcheatout$ (see Figure~\ref{fig:security}). $\hA$ just forwards the in- and outputs to and from the functionality, whereas $\dhB$ pre- and postprocesses them with CPTP maps $ \Lambda^1_{\Bin \rightarrow \FBin \aux}$ and $ \Lambda^2_{\aux \FBout \rightarrow \Bcheatout}$ resulting in a joint map
$ \mF_{\hat{A},\hat{\B}'}  =[\id_{\FAout \rightarrow \Aout} \otimes   \Lambda^2_{\aux \FBout \rightarrow \Bcheatout}]  \circ [\mF_{\FAin \FBin \rightarrow \FAout \FBout} \otimes \id_{K}]  \circ [\id_{\Ain \rightarrow \FAin} \otimes \Lambda^1_{\Bin \rightarrow \FBin \aux}],$
where $\circ$ denotes sequential application of CPTP maps.

In the following we let $\eps \geq 0$ and write $\rho \approx_\eps
\sigma$ if $C(\rho, \sigma) \leq \eps$. $C(\rho, \sigma)$ is the
purified distance, defined as $\sqrt{1-F(\rho, \sigma)^2}$ for
$F(\rho, \sigma):=\tr \sqrt{\sqrt{\rho}\sigma \sqrt{\rho}}$ the
fidelity. 

\begin{definition} A (two-party quantum) protocol $\qp$ for $f$ is {\em $\eps$-correct} if 
for any distribution $p(u,v)$ of the inputs it holds that
\begin{align*} [\id_{\refe} \otimes & \qp_{\A,\B}] (\rho_{\Ain \Bin \refe})
\epsclose[\eps] [\id_{\refe} \otimes  \mF_{\hat\A,\hat{\B}}]
(\rho_{\Ain \Bin \refe}).
\end{align*}
The protocol is \emph{ $\eps$-secure against dishonest Bob} if for
any $p(u,v)$ and for any real adversary $\dB$, there exists an ideal
adversary $\dhB$ such that 
\begin{align*}[\id_{\refe} \otimes  &\qp_{\A,\B'}] (\rho_{\Ain \Bin
    \refe}) \epsclose[\eps] [\id_{\refe} \otimes  \mF_{\hat\A,\dhB}]
  (\rho_{\Ain \Bin \refe}).
\end{align*}
\emph{$\eps$-security against dishonest Alice} is defined
analogously. 
\end{definition}

Since $\mF$ is classical, we can augment it so that it outputs $\tilde{v}$ in addition. More precisely, we define $\mFaug: \FAin\FBin \rightarrow \FAout \FBout \FBin $ by
$\mF_{aug}(\ketbra{u}{u'}_{\FAin} \otimes
\ketbra{v}{v'}_{\FBin}) :=\delta_{u,u'} \delta_{v,v'} \proj{f(u, v)}_{\FAout} 
 \otimes \proj{f(u, v)}_{\FBout} \otimes \proj{v}_{\FBin}. 
$
which has the property that $ \mF=\tr_{\FBin} \circ  \mFaug.$
For a concrete input distribution we define $\sigma_{\refe \Aout \FBin \Bcheatout}:= [\id_{\refe} \otimes  \mF_{\hat{A},\hat{\B}'\!,\mathit{aug}}] (\rho_{\Ain\Bin\refe})$
which satisfies
  $\sigma_{RXY'}\approx_\eps \rho_{RXY'}$
for $\rho_{\refe \Aout \Bcheatout}:=[\id_{\refe} \otimes  \qp_{\A,\B'}] (\rho_{\Ain\Bin\refe}) $ if the protocol is secure against cheating Bob.
We call $\sigma_{\refe \Aout \FBin \Bcheatout}$ a \emph{secure state
  for input distribution $p(u, v)$}. 

\medskip\noindent
{\bf Main Results.}
The proofs of our main results build upon the following lemma which
constructs a cheating strategy for Alice that works \emph{on average}
over the input distribution $p(u, v)$. Subsequently we will show how this result can be used to devise a cheating strategy that works $\emph{for all}$ distributions at the same time. 

\begin{lemma}\label{lem:robust}
If a protocol $\pi$ for the evaluation of $f$ is $\eps$-correct and $\eps$-secure against Bob, then for all input distributions $p(u, v)$ there is a cheating strategy for Alice such that she obtains $\tilde{v}$ with some probability distribution $q(\tilde{v}|u,v)$ satisfying
$\sum_{u, v, \tilde{v}} p(u, v) q(\tilde{v}|u,v) \delta_{f(u, v),
  f(u, \tilde{v})}\geq 1-6\eps \, .
$ Furthermore, $q(\tilde{v}|u,v)$ is almost independent of $u$; i.e., there exists a distribution  $\tilde{q}(\tilde{v}|v)$ such that 
$\sum_{u, v, \tilde{v}} p(u, v) |q(\tilde{v}|u,v)
-\tilde{q}(\tilde{v}|v)|   \leq 6\eps \, .
$
\end{lemma}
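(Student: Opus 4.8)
\emph{Proof idea.} The plan is to let Alice run the protocol honestly but coherently, and then to undo on her own registers the difference between the real and the ideal world, thereby landing a copy of Bob's effective input $\FBin$ in her hands. First I would fix the input distribution $p(u,v)$ and take as the real adversary the honest Bob who merely \emph{purifies} all of his operations, delaying measurements and retaining ancillas. This is a legitimate $\dB$, and since tracing out his retained ancilla recovers the honest CPTP output, Alice's marginal on $\refe\Aout$ is identical to that of the fully honest execution. Applying $\eps$-security against Bob to this $\dB$ yields an ideal $\dhB$, hence a secure state $\sigma_{\refe\Aout\FBin\Bcheatout}$ with $\rho_{\refe\Aout\Bcheatout}\epsclose\sigma_{\refe\Aout\Bcheatout}$, in which the register $\FBin$ holds the classical effective input $\tilde v$ produced by $\dhB$'s preprocessing $\Lambda^1_{\Bin\rightarrow\FBin\aux}$, while Alice's output satisfies $\Aout=f(u,\tilde v)$ exactly.

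Next comes the extraction, which I expect to be the crux. Let Alice also purify her honest operations, so that the global state $\ket{\rho}$ on $\refe$, Alice's full system, and Bob's system $\Bcheatout$---declared to be \emph{all} of Bob---is pure; then Alice's system is a purification of $\rho_{\refe\Bcheatout}$. Tracing out $\Aout$ in the security estimate gives $\rho_{\refe\Bcheatout}\epsclose\sigma_{\refe\Bcheatout}$. I would choose a purification of $\sigma_{\refe\Bcheatout}$ whose purifying system contains the classical register $\FBin$, which is possible precisely because $\FBin$ is classical and external to $\Bcheatout$. Uhlmann's theorem then supplies an isometry $W$ acting \emph{only on Alice's registers} with $W\ket{\rho}\epsclose\ket{\sigma}$. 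The main obstacle is exactly this bookkeeping: getting the bipartite cut right so that $W$ is local to Alice and transports $\FBin$ into her hands, rather than into Bob's. Alice applies $W$ and measures $\FBin$, which defines $q(\tilde v\mid u,v)$; since $W\ket{\rho}\epsclose\ket{\sigma}$, the joint statistics of $(u,v,\tilde v)$---with $(u,v)$ read off $\refe$---are within $\eps$ in total variation (via $D\le C$) of those of $\sigma_{\refe\FBin}$.

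It then remains to read the two bounds off the ideal world and transport them back. For the first bound I would chain $\eps$-correctness and $\eps$-security: correctness forces $\Aout=f(u,v)$ in the honest ideal world, Alice's marginal is unchanged by Bob's purification, and security gives $\Aout=f(u,\tilde v)$ in $\sigma$; two triangle inequalities in purified distance together with $D\le C$ yield $\Pr_\sigma[f(u,\tilde v)=f(u,v)]\ge 1-O(\eps)$, which transports to the extracted $\tilde v$. For the second bound, note that in $\sigma$ the register $\FBin$ is generated by $\Lambda^1$ from $\Bin$ alone, so its law given $(u,v)$ is some $\tilde q(\tilde v\mid v)$ that is \emph{exactly} independent of $u$; transporting this through the $\eps$-closeness of the extraction bounds $\sum_{u,v,\tilde v} p(u,v)\,|q(\tilde v\mid u,v)-\tilde q(\tilde v\mid v)|$. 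Carefully accounting for the triangle steps, the purifying-versus-honest Bob comparison, and the purified-distance-to-total-variation conversions collapses the accumulated constants into the stated $6\eps$ in both estimates. Finally I would remark that the Uhlmann isometry $W$ depends on $p(u,v)$, which is why this lemma only asserts a per-distribution (``on average'') attack, to be upgraded to a simultaneous one afterwards.
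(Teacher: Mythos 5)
Your proposal is correct and follows essentially the same route as the paper: honest-but-purified Bob as the real adversary, the secure state containing register $\FBin$ obtained from the security definition, an Uhlmann isometry local to Alice mapping the real purified state close to a purification of that secure state, and a measurement of $\FBin$, with correctness and the exact classical structure of the ideal world yielding both bounds. The only (harmless) bookkeeping difference is the Uhlmann cut: the paper's isometry acts on $\Apur$ alone, leaving $\Aout$ intact so the real output statistics can be compared directly against the ideal ones, whereas your $W$ consumes $\Aout$ and you compensate by proving $\Pr[f(u,\tilde{v})=f(u,v)]\geq 1-O(\eps)$ entirely inside the ideal state $\sigma$ (via correctness plus security on the $\refe\Aout$ marginal) before transporting the $(u,v,\tilde{v})$ statistics---both versions give the stated $6\eps$ bounds.
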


\begin{proof} We first construct a ``cheating unitary'' $T$ for Alice and then show how Alice can use it to cheat successfully.

Let Alice and Bob play honestly but let them purify their protocol with purifying registers $\Apur$ and $\Bpur$ respectively. We assume without loss of generality that honest parties measure their classical input and hence, $\Apur$ and $\Bpur$ contain copies of $u$ and $v$, respectively. We denote by $\ket{\Phi}_{\refe \Aout \Apur \Bpur \Bout}$ the state of all registers at the end of the protocol. Notice that tracing out $\Apur$ from $\ket{\Phi}_{\refe \Aout \Apur \Bpur \Bout}$ results in a state $\tr_{\Apur}{\proj{\Phi}}_{\refe \Aout \Apur \Bpur \Bout}=\rho_{\refe\Aout \Bpur \Bout}$ which is exactly the final state when Alice played honestly and Bob played dishonestly with the following strategy: he plays the honest but purified strategy and outputs the purification of the protocol (register $\Bpur$) and the output values $f(u,v)$ (register $\Bout$). His combined dishonest register is $\Bcheatout=\Bpur\Bout$. Since the protocol is $\eps$-secure against Bob by assumption, there exists a secure state $\sigma_{\refe\Aout \FBin \Bcheatout}$ satisfying
\begin{equation} \label{eq:sec}
\sigma_{\refe\Aout\Bcheatout }\approx_\eps \rho_{\refe\Aout\Bcheatout }.
\end{equation} 
Let $\ket{\Psi}_{\refe \Aout \pur \FBin \Bcheatout}$ be a purification
of $\sigma_{\refe \Aout \FBin \Bcheatout}$ with purifying register
$\pur$. Note that $\ket{\Psi}_{\refe \Aout \pur \FBin \Bcheatout}$ is
also a purification of $\sigma_{\refe\Aout\Bcheatout}$, this time with
purifying registers $\pur \FBin $. Recall that $\ket{\Phi}_{\refe \Aout\Apur \Bcheatout}$ purifies
$\rho_{\refe \Aout\Bcheatout}$ with purifying register $\Apur$.  Since
 $\sigma_{\refe\Aout\Bcheatout }\approx_\eps \rho_{\refe\Aout\Bcheatout }$ we can use Uhlmann's theorem~\cite{Uhlmann} to conclude that there
exists an isometry $T\equiv T_{\Apur \rightarrow \pur \FBin}$ (with induced CPTP map ${\cal T} \equiv {\cal T}_{\Apur \rightarrow \pur \FBin}$) such that
\begin{equation} \label{eq:unitary} [{\cal T}_{X'_1 \rightarrow P \tilde{V}} \otimes
  \id_{\refe \Aout \Bcheatout}] (\proj{\Phi}_{\refe \Aout\Apur \Bcheatout} )\approx_\eps \proj{\Psi}_{\refe\Aout\pur \FBin \Bcheatout} \, .
\end{equation}
This concludes the construction of $T\equiv T_{\Apur \rightarrow \pur \FBin}$.

We will now show how Alice can use the isometry $T$ to cheat.  Notice that
tracing out $\Bpur$ from $\ket{\Phi}_{\refe \Aout\Apur \Bout\Bpur}$
results exactly in the final state when Bob played honestly and Alice
played dishonestly with the following strategy: she plays the honest
but purified strategy and outputs the purification of the protocol
(register $\Apur$) and the output values $f(u,\hut{v})$ (register
$\Aout$). She then applies $T_{\Apur \rightarrow \pur \FBin}$, measures register $\FBin$ in the computational basis and obtains a
value $\tilde{v}$. It remains to argue that Alice can compute $f(\hut{u},\hut{v})$ with good probability based on
the value $\tilde{v}$ that she has obtained from measuring register
$\FBin$.

Let $\mathcal{M}_{\refe \FBin \Aout}$ be the CPTP map that measures registers
$\refe,\Aout$ and $\FBin$ in the computational basis. Tracing over
$\pur\Bcheatout$ and applying $\mathcal{M}_{\refe \FBin \Aout}$ on both sides of
Equation~\eqref{eq:unitary}, we
find
\begin{align}
[\mathcal{M}_{\refe \Aout\FBin }& \otimes \tr_{\pur\Bcheatout}] ([{\cal T}_{\Apur \rightarrow \pur \FBin} \otimes
  \id_{\refe \Aout \Bcheatout}](\proj{\Phi}_{\refe \Aout\Apur \Bcheatout} )) \nonumber \\ 
  & \approx_\eps [\mathcal{M}_{\refe  \Aout\FBin} \otimes  \tr_{\pur\Bcheatout} ](\proj{\Psi}_{\refe\Aout\pur \FBin \Bcheatout}) \, \label{eq:5}
\end{align} 
by the monotonicity of the purified distance under CPTP maps. The
right-hand side of Equation~\eqref{eq:5} equals 
\begin{align*}
\sum_{u,v, \tilde{v}} p(u,v) \tilde{q}(\tilde{v}|\hut{v})\proj{u v}_R
\otimes \proj{\tilde{v}}_{\FBin} \otimes 
\proj{f(\hut{u},\tilde{v})}_{\Aout} \, ,
\end{align*}
for some probability distribution $\tilde{q}(\tilde{v}|v)$ that is conditioned only on
  Bob's input $v$, since $\ket{\Psi}_{\refe\Aout\pur \FBin \Bcheatout}$ is a purification of the secure state
  $\sigma_{\refe\Aout \FBin \Bcheatout}$. The left-hand side
  of Equation~\eqref{eq:5} equals
\begin{align} 
\begin{split}
\sum_{u,v, \tilde{v}, x}  p(u,v) q(\tilde{v}|u, v)  & \proj{u
  v}_{\refe} \otimes  \proj{\tilde{v}}_{\FBin} \\
& \otimes r(x|u, v, \tilde{v})  \proj{x}_{\Aout} 
\end{split} \label{eq:flight0}
\end{align}
for some conditional probability distributions $q(\tilde{v}|u,
\hut{v})$ and $ r(x|u, v, \tilde{v})$. Because of
the correctness of the protocol, term~\eqref{eq:flight0} is $\eps$-close to 
\begin{equation}\label{eq:flight}
\sum_{u,v, \tilde{v}} p(u,v) \bar{q}(\tilde{v}|u, v)   \proj{u
  v}_{\refe}  \otimes \proj{\tilde{v}}_{\FBin} \otimes \proj{f(\hut{u},\hut{v})}_{\Aout}
\end{equation}
for some conditional probability distribution $ \bar{q}(\tilde{v}|u,
v) $.  Noting that the $\eps$-closeness of \eqref{eq:flight0} and
\eqref{eq:flight} implies that $p(\cdot, \cdot) q(\cdot|\cdot, \cdot)$
and $p(\cdot, \cdot) \bar{q}(\cdot|\cdot, \cdot)$ (when interpreted as quantum states) are $\eps$-close in
purified distance, we can replace $p(\cdot, \cdot)
\bar{q}(\cdot|\cdot, \cdot)$ in \eqref{eq:flight} by $p(\cdot, \cdot)
q(\cdot|\cdot, \cdot)$ increasing the purified distance to the
left-hand side of Equation~\eqref{eq:5} only to $2\eps$. Putting things together,
Equation~\eqref{eq:5} implies
\begin{align}  \label{eq:sameoutput}
&\sum_{u,v,\tilde{v}} p(u,v) q(\tilde{v}|u, v)  \proj{u v}_{\refe} \,
\proj{\tilde{v}}_{\FBin} \,
\proj{f(\hut{u},\hut{v})}_{\Aout} \\    \nonumber
&\approx_{3\eps} \sum_{u,v, \tilde{v}} p(u,v)
\tilde{q}(\tilde{v}|\hut{v}) \proj{u v}_R \, \proj{\tilde{v}}_{\FBin}
\, \proj{f(\hut{u},\tilde{v})}_{\Aout}   \, .
 \end{align}

Sandwiching both sides with $\tr [Z \cdot ]$, where 
$Z=\sum_{u,v, \tilde{v}} \proj{u v}_R \otimes \proj{\tilde{v}}_{\FBin}
\otimes \proj{f(\hut{u},\tilde{v})}_{\Aout} $
we find the first claim since the purified distance of two distributions upper bounds their total variation distance and since the latter does not increase under $\tr [Z \cdot ]$. The second claim follows similarly by
tracing out register $\Aout$ from Equation~\eqref{eq:sameoutput}. 
\end{proof}

Applying the lemma to the uniform distribution we immediately obtain our impossibility result for perfectly secure protocols.

\begin{theorem} \label{thm:perfect} Let $\qp$ be a protocol for the
  evaluation of $f$ which is
 perfectly correct and perfectly secure ($\eps=0$) against Bob. Then, if Bob has input $v$, Alice can compute $f(u,v)$ for all $u$.
\end{theorem}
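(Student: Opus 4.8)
The plan is to specialize the preceding Lemma to $\eps = 0$ and to the uniform input distribution $p(u,v)$, which assigns positive probability to every pair $(u,v)$, and then to upgrade the two averaged claims of the Lemma into exact pointwise statements.

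First I would examine the first claim of the Lemma. At $\eps = 0$ it asserts $\sum_{u,v,\tilde{v}} p(u,v)\,q(\tilde{v}|u,v)\,\delta_{f(u,v),f(u,\tilde{v})} \geq 1$. Since $\sum_{u,v,\tilde{v}} p(u,v)\,q(\tilde{v}|u,v) = 1$ and every Kronecker delta is at most $1$, the left-hand side is also at most $1$, so the inequality must be saturated. This forces $f(u,\tilde{v}) = f(u,v)$ for every triple $(u,v,\tilde{v})$ with $p(u,v)\,q(\tilde{v}|u,v) > 0$.

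Next I would read off the second claim, which at $\eps = 0$ gives $\sum_{u,v,\tilde{v}} p(u,v)\,|q(\tilde{v}|u,v) - \tilde{q}(\tilde{v}|v)| = 0$. As the uniform $p$ charges every $(u,v)$, this yields $q(\tilde{v}|u,v) = \tilde{q}(\tilde{v}|v)$ for all $u$; that is, the value $\tilde{v}$ produced by Alice's attack is drawn from a distribution that depends only on Bob's input $v$ and not on Alice's own input.

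Finally I would combine the two. Fixing Bob's input $v$, let $\tilde{v}$ be any value with $\tilde{q}(\tilde{v}|v) > 0$. By the independence just established, $q(\tilde{v}|u,v) = \tilde{q}(\tilde{v}|v) > 0$ for \emph{every} $u$, so the first consequence gives $f(u,\tilde{v}) = f(u,v)$ simultaneously for all $u$. Thus Alice executes the cheating strategy of the Lemma, measures register $\FBin$ to obtain such a $\tilde{v}$, and then evaluates the publicly known function $f(\cdot,\tilde{v})$: for each $u$ this returns $f(u,\tilde{v}) = f(u,v)$, so she learns the entire column of the function table indexed by $v$. The only real step, handled in the two middle paragraphs, is the passage from the averaged inequalities of the Lemma to these pointwise equalities; it is here that full support of the uniform distribution together with the tightness of the $\eps = 0$ bounds does the work, and no further estimate is needed.
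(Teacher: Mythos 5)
Your proposal is correct and follows essentially the same route as the paper's own proof: specialize the Lemma to $\eps=0$ and the uniform distribution, use the saturation of the first claim together with full support of $p$ to get pointwise equalities, and use the second claim to make $q(\tilde{v}|u,v)=\tilde{q}(\tilde{v}|v)$ independent of $u$, so that any $\tilde{v}$ in the support satisfies $f(u,\tilde{v})=f(u,v)$ for all $u$. Your write-up merely spells out the saturation and full-support steps a bit more explicitly than the paper does.
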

We note that this implies that Alice can
completely break the security for any non-trivial function $f$.
\begin{proof}
Letting $p(u, v)=\frac{1}{|U||V|}$ and $\eps=0$ in the lemma results in the statement that if Alice has input $u_0$, then she will obtain $\tilde{v}$ from the distribution $q(\tilde{v}|u_0, v)$ which equals $\tilde{q}(\tilde{v}|v)$. But since also $q(\tilde{v}|u, v)=\tilde{q}(\tilde{v}|v)$ for all $u$, we have 
$\frac{1}{|U||V|}\sum_{u, v, \tilde{v}}  q(\tilde{v}|u_0, v) \delta_{f(u, v), f(u, \tilde{v})}=1. $
In other words, all $\tilde{v}$ that occur (i.e.~that have $\tilde{q}(\tilde{v}|v)>0$) satisfy for all $u$, $f(u, v)=f(u, \tilde{v})$. Alice can therefore compute the function for all $u$.
\end{proof}

The impossibility result for the case of imperfect protocols is also based on the lemma, but requires a subtle swap in the order of quantifiers (from ``$\forall$ input $\exists$ ideal adversary'' to ``$\exists$ ideal adversary $\forall$ input'') which we achieve by use of von Neumann's minimax theorem. 

\begin{theorem} \label{th:minimax} If a protocol $\qp$ for the evaluation of $f$ is $\eps$-correct and $\eps$-secure
  against Bob, then there is a cheating strategy for Alice (where she
  uses input $u_0$ while Bob has input $v$) which gives her
  $\tilde{v}$ distributed according to some distribution
  $Q(\tilde{v}|u_0, v)$ such that for all $u$:
$\Pr_{\tilde{v} \sim Q}[f(u,v) = f(u,\tilde{v})] \geq 1- 28 \eps$.
\end{theorem}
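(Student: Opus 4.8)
The plan is to turn Lemma~\ref{lem:robust}, which produces a distribution-dependent cheating strategy that only succeeds \emph{on average}, into a single distribution-independent strategy that succeeds \emph{for every} input by a minimax argument, exactly as announced before the statement. The key observation is that a cheating strategy of the type built in the lemma (play the purified honest protocol, apply a fixed isometry $T$, measure $\FBin$) induces a conditional output distribution $q(\tilde v\mid u,v)$ that is a function of the strategy alone and does \emph{not} depend on the input distribution $p(u,v)$; the distribution $p$ only governs which $(u,v)$ occur. I would therefore set up a two-player zero-sum game in which the minimizing player chooses $p(u,v)$ from the simplex, the maximizing player (Alice) chooses a conditional distribution $q$ from the convex hull $\mathcal{Q}$ of the strategy-induced distributions $\{q_T\}$, and the payoff is the bilinear form $G(p,q)=\sum_{u,v,\tilde v} p(u,v)\,q(\tilde v\mid u,v)\,\delta_{f(u,v),f(u,\tilde v)}$.

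First I would record that the first claim of Lemma~\ref{lem:robust} says precisely that for every $p$ there is a strategy with $G(p,q)\ge 1-6\eps$, i.e.\ $\min_p \max_{q\in\mathcal{Q}} G(p,q)\ge 1-6\eps$. After checking the hypotheses of von Neumann's minimax theorem---both $p$ and $q$ range over convex compact sets (finite alphabets, take the closure of the hull) and $G$ is bilinear, hence continuous---I would swap the quantifiers to obtain a single mixed strategy $q^\ast=\sum_i \mu_i q_{T_i}\in\mathcal{Q}$ with $G(p,q^\ast)\ge 1-6\eps$ for \emph{all} $p$. Specializing $p$ to the point mass at $(u,v)$ then yields the per-input bound $\sum_{\tilde v} q^\ast(\tilde v\mid u,v)\,\delta_{f(u,v),f(u,\tilde v)}\ge 1-6\eps$ for every $(u,v)$. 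Crucially, $q^\ast$ is realizable: Alice samples $i\sim\mu$ and runs the corresponding strategy $T_i$.

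The remaining gap is that in the theorem Alice must fix a single input $u_0$, whereas the per-input bound controls $q^\ast(\cdot\mid u,v)$ at the \emph{matching} input $u$. To bridge this I would use the second claim of the lemma, upgraded to the mixed strategy $q^\ast$: there is a reference $\tilde q^\ast(\tilde v\mid v)$, independent of $u$, with $\|q^\ast(\cdot\mid u,v)-\tilde q^\ast(\cdot\mid v)\|_{1}$ small for every $(u,v)$. Fixing Alice's input to $u_0$ and setting $Q(\tilde v\mid u_0,v):=q^\ast(\tilde v\mid u_0,v)$, I would then chain, for an arbitrary challenge $u$, the per-input success bound at $u$ with the triangle inequality $\|Q(\cdot\mid u_0,v)-q^\ast(\cdot\mid u,v)\|_1\le \|Q(\cdot\mid u_0,v)-\tilde q^\ast(\cdot\mid v)\|_1+\|\tilde q^\ast(\cdot\mid v)-q^\ast(\cdot\mid u,v)\|_1$, since total variation distance bounds the change in the probability of the event $f(u,v)=f(u,\tilde v)$. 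Bookkeeping the various copies of $6\eps$ through the minimax and the triangle inequality gives $\Pr_{\tilde v\sim Q}[f(u,v)=f(u,\tilde v)]\ge 1-28\eps$ for all $u$, with $Q=Q(\cdot\mid u_0,v)$.

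I expect the main obstacle to be the third step: transferring the \emph{on-average} $u$-independence of Lemma~\ref{lem:robust} into a statement that holds \emph{uniformly} in $(u,v)$ for the \emph{single} mixed strategy $q^\ast$ produced by the first minimax. The difficulty is that the $\ell_1$ (total-variation) penalty is convex but not bilinear, so it does not plug directly into von Neumann's theorem the way $G$ does; I would resolve this either by folding the $u$-independence into the success payoff through the $u$-independent reference $\tilde q$ (so that a single swap controls both quantities simultaneously, at the cost of a worse constant), or by enlarging the game using the variational formula $\|a-b\|_1=2\max_{0\le g\le 1}\sum_{\tilde v} g(\tilde v)\,(a-b)$ to obtain a concave--convex payoff to which a minimax theorem still applies. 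Getting these two controls to hold for the \emph{same} $q^\ast$, and tracking how the error terms accumulate to the stated $28\eps$, is the delicate part; the rest is routine.
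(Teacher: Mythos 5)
Your overall plan---converting the on-average guarantee of Lemma~\ref{lem:robust} into a per-input guarantee via a minimax quantifier swap---is exactly the paper's strategy, and the ``main obstacle'' you flag at the end is indeed the crux: as your main line stands (minimax applied to the success-only payoff $G$), the resulting mixed strategy $q^\ast$ comes with \emph{no} uniform control of $u$-independence, and the ``second claim of the lemma, upgraded to $q^\ast$'' that you invoke has no justification, since $q^\ast$ is not tied to any single input distribution. The paper closes this gap by your first proposed fix, built in from the start rather than patched in afterwards: the game is played with payoff
$g(u,v,T) = \sum_{\tilde v} q(\tilde v|u,v,T)\,\delta_{f(u,v),f(u,\tilde v)} - \sum_{\tilde v}\lvert q(\tilde v|u,v,T)-\tilde q(\tilde v|v,T)\rvert$,
where the pure strategies are the finitely many cheating isometries $T$ (one per element of a finite $\eps$-net of the simplex, with the convention that $T$ remembers the $p$ it was built for). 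The lemma gives $\min_{p}\max_{T}\sum_{u,v}p(u,v)\,g(u,v,T)\ge 1-12\eps$ over the net, hence $\ge 1-14\eps$ over the whole simplex, and von Neumann's theorem for this finite matrix game yields a mixture $p''(T)$ with $\sum_T p''(T)g(u,v,T)\ge 1-14\eps$ for \emph{all} $(u,v)$ simultaneously. Your worry that the $\ell_1$ penalty is ``convex but not bilinear'' dissolves in this formulation: the penalty is only ever evaluated on pure strategies $T$, where it is a fixed number, so linearity of the mixed payoff in $p''$ is automatic; the nonlinearity resurfaces only when passing from the average penalty $\sum_T p''(T)\lVert q_T(\cdot|u,v)-\tilde q_T(\cdot|v)\rVert_1$ to the penalty of the mixture $Q=\sum_T p''(T)q_T$, and there the triangle inequality gives exactly the direction of inequality you need. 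Splitting $1-14\eps$ into a success deficit $\eps_1$ and a penalty bound $\eps_2$ and running your own chaining step gives $1-\eps_1-2\eps_2\ge 1-28\eps$.

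Two smaller points of bookkeeping. First, your observation that $q_T$ depends only on the strategy $T$ and not on $p$ is correct, but the reference distribution $\tilde q$ does \emph{not} share this property: it comes from the secure state $\sigma$ for the particular input distribution $p$, which is why the paper insists that $T$ determine $p$ uniquely and why your game must really be played over pairs $(q_T,\tilde q_T)$. Second, where you invoke a continuous minimax theorem on the closed convex hull, the paper instead reduces to a finite game via the $\eps$-net at a cost of $2\eps$; your compactness variant is legitimate (the payoff is bilinear on compact convex sets in finite dimension) and would in fact yield $1-24\eps$ in place of $1-28\eps$. So the proposal, completed by your first fix, is essentially the paper's proof; your second fix (a Sion-type concave--convex game incorporating the $\ell_1$ term via its variational form) would also work, but it is more machinery than is needed.
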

\begin{proof}
%

  The argument is inspired
  by~\cite{KretschmannWernerBitCommit2007}. For a finite set $\mathcal{S}$, we
  denote by $\Delta(\mathcal{S})$ the simplex of probability distributions
  over $\mathcal{S}$. Denote by $\cW$ the set of
  pairs $(u, v)$. Consider a finite $\eps$-net $\cD$ of
  $\Delta(\cW)$ in total variation distance; and to each distribution
  in $\cD$ the corresponding cheating unitary $T$ constructed in the
  proof of the lemma. We collect all these unitaries in the (finite) set $\cE$
  and assume that $T$ determines $p$ uniquely, as we could include the
  value $p$ into $T$. For each such $T$, let $q(\tilde{v}|u,v, T) $
  and $\tilde{q}(\tilde{v}|v, T)$ be the distributions from the
  lemma. Define the payoff function
$g(u, v, T)  := \sum_{\tilde{v}} q(\tilde{v}|u,v, T) \delta_{f(u, v),f(u, \tilde{v})}  -\sum_{\tilde{v}} |q(\tilde{v}|u,v, T) -\tilde{q}(\tilde{v}|v, T)| $.
The lemma then yields $1-12\eps \leq \min_{p\in \cD} \max_{T\in \cE} \sum_{u,v} p(u,v) g(u,
v, T)$ which is at most $2\eps + \min_{p'\in \Delta(\cW)} \max_{T\in \cE} \sum_{u,v}
p'(u,v) g(u, v, T)$, since replacing $p$ by $p'$ incurs
only an overall change in the value by $2\eps$ (as $-1 \leq
g(u,v,T)\leq 1$) . By von Neumann's minimax theorem, this last term equals 
$2\eps + \max_{p''\in \Delta(\cE)} \min_{(u, v) \in \cW} \sum_{T}
g(u, v, T)p''(T)$ \footnote{In order to apply von Neumann's
  theorem, note that the initial term equals
  $\min_{p'\in \Delta(\cW)} \max_{p''\in \Delta(\cE)} \sum_{u,v}
  p'(u,v) g(u, v, T) p''(T)$ since the maximal value of the latter is
  attained at an extreme point. Von Neumann's minimax theorem
  \cite{Neumann-Gesellschaftsspiele} allows us to swap minimization and
  maximization leading to $\max_{p''\in \Delta(\cE)} \min_{p\in
    \Delta(\cW)} \sum_{u, v, T} p(u, v) g(u, v, T)p''(T)$ without
  changing the value. This expression corresponds to the final term since the minimization can without loss of
  generality be restricted to its extreme points .}.



Hence, we have shown 
that there is a strategy for Alice, where she chooses her cheating unitary $T$ with probability $p''(T)$, such that (for some $\eps_1+\eps_2\leq 14\eps$) for all $u, v$,
\begin{equation} \label{eq:correct} 
\sum_{\tilde{v}}  Q(\tilde{v}|u,v) \delta_{f(u, v),f(u, \tilde{v})}\geq 1-\eps_1
\end{equation}
and
$  \sum_{\tilde{v}}   |Q(\tilde{v}|u,v) -\tilde{Q}(\tilde{v}|v)| \leq \sum_{\tilde{v}, T} p(T) |q(\tilde{v}|u,v, T) -\tilde{q}(\tilde{v}|v, T)|\leq \eps_2,$
where $Q(\tilde{v}|u,v) := \sum_{T} p(T)  q(\tilde{v}|u,v,
T) $ and $\tilde{Q}(\tilde{v}|v):=  \sum_{T} p(T)
\tilde{q}(\tilde{v}|v, T)$. This implies that for all $u, v$,
$\sum_{\tilde{v}}| Q(\tilde{v}|u_0,v)- Q(\tilde{v}|u,v)|\leq
2\eps_2 \, $.
Combining this inequality with Equation~\eqref{eq:correct}, we find for all $u, v$,
$\sum_{\tilde{v}}   Q(\tilde{v}|u_0,v) \delta_{f(u, v),f(u, \tilde{v})}\geq 1-\eps_1-2\eps_2 \geq 1-28\eps \, .$
\end{proof}


One might wonder whether Theorem~\ref{th:minimax} can be strengthened
to obtain, with probability $1-O(\eps)$,  a $\tilde{v}$  such that for
all $u: f(u,v)=f(u,\tilde{v})$. It turns  out that this depends on the
function $f$: when $f$ is equality $[\text{EQ}(u,v) = 1$ iff $u=v]$
and inner-product modulo 2 $[\text{IP}(u,v)=\sum_i u_i\cdot v_i \mod 2]$,  the stronger conclusion is possible. However for disjointness $[\text{DISJ}(u,v) =0$ iff $\exists i: u_i=v_i=1]$ such a strengthening is not possible showing that our result is tight in general.

For EQ, we reason as follows. Set $u=v$ in Theorem~\ref{th:minimax}. Alice is able to sample a $\tilde{v}$ such that $\sum_{\tilde{v}} Q(\tilde{v}|u_0,v) \delta_{EQ(v, v),EQ(v, \tilde{v})} \geq 1-28\eps$. Since $\delta_{\text{EQ}(v, v),\text{EQ}(v, \tilde{v})}=1$ iff $v=\tilde{v}$, $Q(v|u_0,v) \geq 1-28\eps$.
When $f$ is IP, we pick $u$ uniform at random and obtain $\sum_{\tilde{v}} Q(\tilde{v}|u_0,v)(2^{-n}\sum_{u}  \delta_{\text{IP}(u,v),\text{IP}(u,\tilde{v})})\geq 1-28\eps$. Using $2^{-n}\sum_{u}  \delta_{\text{IP}(u,v),\text{IP}(u,\tilde{v})} =1$ if $\tilde{v}=v$, and $\frac{1}{2}$ if $\tilde{v}\neq v$, we find $Q(v|u_0,v) + \frac{1}{2}(1-Q(v|u_0,v)) \geq 1-28\eps$, which implies $Q(v|u_0,v) \geq 1-56\eps$.
Interestingly, for DISJ such an argument is not possible. Assume that
we have a protocol  that is  $\eps$-secure against Bob. Bob could now
run the protocol normally on strings $v$ with Hamming weight $|v| \leq
n/2$, but on inputs $v$ with $|v|>n/2$ he could flip, at random,
$\sqrt n$ of $v$'s bits that are 1. It is not hard to see that this
new protocol is still $\eps$-secure and $\eps +
O(\frac{1}{\sqrt{n}})$-correct. The loss in the correctness is due to
the fact that, on high Hamming-weight strings, the protocol may, with
a small probability, not be correct. On the other hand, on
high-Hamming-weight inputs, the protocol can not transmit or leak the complete input $v$ to Alice, simply because Bob does not use it.


\medskip\noindent {\bf Acknowledgments.}  We thank Anne Broadbent,
Ivan Damg{\aa}rd, Fr\'ed\'eric Dupuis, Louis Salvail, Christopher
Portmann and Renato Renner for valuable discussions, and an anonymous
referee for suggesting an example presented in the appendix. M.C.~is
supported by the Swiss National Science Foundation (Grant
No. PP00P2-128455 and 20CH21-138799), the NCCR ``Quantum Science and
Technology,'' and the German Science Foundation (Grant No.~CH
843/2-1). C.S. is supported by a NWO Veni grant. H.B. was supported by
an NWO VICI grant and by EU project QCS.

\appendix*

\section{Appendix: Additional Comments about the Security Definition} \label{appendix}
Since this work presents impossibility results for the secure
computation of $f$, one may wonder how the results are affected when the notions of security are weakened. In particular, one may ask whether similar results can be obtained when, instead of the real/ideal-world paradigm, notions of security more akin to the ones used in the well-known no-go proofs for bit commitment and one-sided computation would be used. Whereas we do not know the answer to this
question in general, we wish to emphasize the difficulty in
formalizing such notions of security satisfactorily. 

With regards to the real/ideal-world paradigm we will now comment on some specific notions of security used in this work. A central object in the real/ideal-world paradigm is the ideal functionality. Since we are faced with the task of the secure evaluation of a
\emph{classical} deterministic function, we chose to consider an ideal
functionality which measures the inputs it receives and outputs
orthogonal states to the parties that correspond to the function
values. Note that in certain situations one may be satisfied with
different (possibly weaker) ideal functionalities for this task; we leave open the question to what extend our results remain valid in such situations.

One may also wonder whether the
purification of the inputs could not be omitted. Note that such an omission would
correspond to a serious limitation of the environment to distinguish
the real and from the ideal world. With respect to the stronger notion
of security discussed in the main text, for instance, there can be a large
difference between the diamond norm (which corresponds to purified
inputs) and the induced norm (where the maximisation is over inputs
that are not purified), see e.g.~\cite{KretschmannWernerBitCommit2007}. 
This difference does not
occur in the case of perfectly secure protocols, where one can
therefore omit the reference. The omission of the reference has a more
serious effect on the weaker notion of security considered in this work, even
in the case of perfect security, since we only consider (purified)
classical inputs; in fact, omission would invalidate the no-go result as we will now show. We leave it as an open question whether Theorem 2 can be proven were arbitrary (unpurified) inputs considered.


The following example was suggested to us by an anonymous
referee and shows the necessity of requiring the register $R$ in our security definition. Consider the classical deterministic function
$f((s_0,s_1),b)=(b,s_b)$ of $n$-bit strings $s_0,s_1$ and a choice bit
$b$ which is inspired by a one-out-of-two-string-oblivious transfer
but outputs both the choice bit and the string of choice to both Alice
and Bob. Let us consider the following protocol $\pi_{\A,\B}$: Bob
sends $b$ to Alice and Alice responds with $s_b$.

Clearly, this protocol is secure against cheating Bob, who learns no more than either $s_0$ or $s_1$.
One might also think that this protocol is perfectly secure
against cheating Alice because Alice learns Bob's choice bit
anyway. Indeed, if we defined security without purifying register $R$ one
could construct an ideal adversary Alice $\dhA$ from any real
adversary $\dA$ as follows. Let $\dhA$ simulate two independent copies
of $\dA$ and give $b=0$ to the first and $b=1$ to the second copy
which both respond with a string $s_0$ and $s_1$, respectively. Let
$\dhA$ input these two strings $(s_0,s_1)$ into the ideal
functionality $\mF$ and receive $(b,s_b)$ as output from $\mF$. Output
whatever the real copy of $\dA$ corresponding to the bit $b$ outputs
(and discard the other copy). This simulation generates an output in
the ideal world which is identically distributed to the one from the
real protocol. Hence, the protocol would be perfectly secure against Alice. Notice
that this example shows that an analogue of our Theorem~1 cannot be
proven for this weaker security definition.

We stress that the above protocol is \emph{not} secure according to
our security definition by virtue of the purifying register $R$. Consider
the uniform input distribution over $n$-bit strings $(s_0,s_1)$ in the
$2n$-qubit register $U$ and the choice bit $b$ in register $V$. Hence,
the input state $\rho_{RUV}$ if fully entangled between $R$ and
$UV$. Let us consider the following real adversary $\dA$ who measures
the first $n$ qubits of $U$ in the computational basis in case $b=0$
or performs the measurement in the Hadamard basis if $b=1$ and returns
the measurement outcome as $s_b$. Due to the entanglement, the first
$n$ qubits of $R$ collapse to the measured state. Notice that
for this adversary $\dA$, the argument above is no longer applicable,
because $\dhA$ cannot simulate two independent copies of $\dA$ as the
$U$ register is only available once. In fact, for this adversarial
strategy $\dA$, only one of the two strings $s_0, s_1$ is well-defined
as the other string corresponds to the measurement outcome in a
complementary basis of the same quantum state. This highlights the
intuitive security problem of the suggested protocol, namely that it
is not guaranteed that both $s_0$ and $s_1$ classically exist for a
cheating Alice. This shows that the protocol is not secure against cheating Alice and that it therefore does not stand in contradiction with our results.

\bibliographystyle{alpha}
\bibliography{impossible}

\end{document}